\newcommand{\sharon}[1]{}
\newcommand{\sharonX}[1]{}
\newcommand{\SIX}[1]{}
\newcommand{\NRX}[1]{}
\newcommand{\OISHX}[1]{}
\newcommand{\noamKeepEmph}[1]{#1}
\newcommand{\colbf}[1]{\textrm{#1}}
\newcommand{\metaX}[1]{}
\newcommand{\ignore}[1]{}
\newcommand{\QUIC}{\textsc{Quic3}}
\newcommand{\NRparX}[1]{}
\newcommand{\ie}{\textit{i.e.}}
\newcommand{\rank}{rank}
\newcommand{\ranked}{ranked}
\newcommand{\shrinking}{squeezing}
\newcommand{\shrinkage}{squeezing}
\newcommand{\shrinker}{squeezer}
\newcommand{\Shrinker}{Squeezer}
\newcommand{\shrinkagefunc}{{\shrinkage} function}
\newcommand{\func}{\curlyvee}
\newcommand{\measure}{rank}
\newcommand{\m}{\rho}
\newcommand{\consecution}{simulation inducing}
\newcommand{\initiation}{initial anchor}
\newcommand{\safety}{fault preservation}
\newcommand{\recidivist}{recidivist}
\newcommand{\recidivism}{recidivism}
\newcommand{\Recidivism}{Recidivism}
\newcommand{\TS}{\textit{TS}}
\newcommand{\Init}{\textit{Init}}
\newcommand{\Tr}{\textit{Tr}}
\newcommand{\Spec}{\textit{P}}
\newcommand{\eqdef}{~\widehat{=}~}
\newcommand{\tighteq}{\hspace{1pt}{=}\hspace{1pt}}
\newcommand{\smallish}{\fontsize{6}{6}\selectfont}
\newcommand{\mediumish}{\fontsize{9}{9}\selectfont}
\newcommand{\setOfStates}{\mathrm{\Sigma}}
\newcommand{\alen}{\ell}
\newcommand{\Bound}{\alen_\textsf{B}}
\newcommand{\sq}{\func}
\newcommand{\sqa}[1]{\func\!(#1)}
\newcommand{\cstate}{\sigma}
\newcommand{\csqstate}[1][\cstate]{\sqa{#1}}
\newcommand{\prog}[1]{\lstinline!#1!}
\tikzset{
    table cell/.style={
        draw,
        inner xsep=0pt,
        outer sep=0pt
    },
    table/.style={
        matrix of nodes,
        row sep=-\pgflinewidth,
        column sep=-\pgflinewidth,
        inner sep=0pt,
        nodes={table cell}
    },
    math table cell/.style={
    	table cell,
        anchor=south
    },
    math table/.style={
        matrix of math nodes,
        row sep=-\pgflinewidth,
        column sep=-\pgflinewidth,
        nodes={math table cell}
    },
    minitable/.style={
    	math table, every node/.append style={
        	scale=0.6,
            minimum width=12mm, minimum height=5mm}
    }
}
\title{Putting the Squeeze on Array Programs:\\Loop Verification via Inductive Rank Reduction}
\titlerunning{Putting the Squeeze on Array Programs}
\author{
Oren Ish-Shalom\inst{1}
\and
Shachar Itzhaky\inst{2}
\and
Noam Rinetzky\inst{1}
\and
Sharon Shoham\inst{1}
}
\authorrunning{O. Ish-Shalom et al.}
\institute{Tel Aviv University, Israel \and Technion, Israel}
\begin{document}

\maketitle

\begin{abstract}
Automatic verification of array manipulating programs is a challenging problem because
it often amounts to the inference of inductive quantified loop invariants which, in some cases, may not even be first-order expressible.
In this paper, we suggest a novel verification technique that is based on induction on user-defined \emph{rank}  of program states as an alternative to loop-invariants.
Our technique, dubbed \emph{inductive rank reduction}, works in two steps. Firstly, we simplify the verification problem and prove that the program is correct when the input state contains an input array of length $\Bound$ or less, \noamKeepEmph{using the length of the array as the rank of the state.}{}
Secondly, we employ a \emph{squeezing function} $\sq$ which converts a program state $\cstate$ with an array of length $\alen > \Bound$ to a state $\csqstate$ containing an array of length $\alen-1$ or less. We prove that when $\sq$ satisfies certain natural conditions then if the program  violates its specification on $\cstate$  then it does so also on $\csqstate$. The correctness of  the program on inputs with arrays of arbitrary lengths  follows by induction.

We make our technique automatic for array programs whose length of execution is proportional to the length of the input arrays by (i) performing the first step using symbolic execution,
(ii) verifying the conditions required of $\sq$ using Z3,
and (iii) providing a heuristic procedure for synthesizing $\sq$.
We implemented our technique and applied it  successfully to several
interesting array-manipulating programs, including a bidirectional summation program whose loop invariant cannot be expressed in first-order logic while its specification is quantifier-free.

\end{abstract}

\pagestyle{plain}

\section{Introduction}\label{Se:Intro}

Automatic verification of array manipulating programs is a challenging problem because
it often amounts to the inference of inductive quantified loop invariants.
These invariants are frequently quite hard to come up with, even for seemingly
simple and innocuous program, both automatically and manually.
The purpose of this paper is to suggest an alternative kind of correctness
witness, which is often simpler \noamKeepEmph{than inductive invariants}{} and hence more amenable to automated search.

Loop invariants, the basis  of traditional verification approaches,
offer an induction scheme based on the time axis,
\textit{i.e.}, on the number of loop iterations.
We suggest an alternative approach in which induction is carried out
on the space axis, \text{i.e.} on a (user-defined notion of the) \emph{\rank} (e.g., size) of the program state.
This is particularly useful in the setting of infinite-state systems,
where the size of the state may be unbounded.
In this induction scheme,  establishing the induction step
relies on a \emph{{\shrinkagefunc}}
$\func : \setOfStates\to\setOfStates$ (read $\func$ as \emph{squeeze}) that maps
program states to lower-{\ranked} program states (up to a given minima).
Roughly speaking, the squeezing function should satisfy the following conditions, \noamKeepEmph{described here intuitively and formalized in \Cref{def:simulation}:}
\begin{itemize}
\item \textbf{Initial anchor.}~ $\sq$ maps initial states to initial states.
\item \textbf{Simulation inducing.}~ $\sq$ induces a certain form of simulation between the program states and their squeezed counterparts.
\item \textbf{Fault preservation.}~ $\sq$ maps unsafe states to unsafe states.
\end{itemize}

Our main theorem (\Cref{thm:soundness}) shows that if these conditions are satisfied then  $P$ is correct, provided it is correct on its \emph{base}, i.e., on the states with minimal \rank.
The crux of the proof is that as a consequence of the aforementioned conditions, if $P$  violates its specification on a state $\cstate$  then it also violates it  on~$\csqstate$.  Hence, if $P$ satisfies the specification on the base states, by induction it satisfies it on any state.

The function $\sq$  itself can be given by the user or, as we show in \Cref{Se:Syn}, automatically obtained for a  class  of array programs which iterate over their input arrays looking for a particular element (e.g., \lstinline!strchr!) or aggregating their elements (e.g., \lstinline!max!). 
In our experiments, we utilized automatically synthesized squeezing functions to  verify   natural specifications of  several interesting array-manipulating programs, some of which are beyond the capabilities of existing automatic techniques.
Arguably, the key benefit of the our approach is that the {\shrinking} functions are often rather simple, and thus finding them and establishing that they satisfy the required properties is an easier task than the inference of loop invariants. For example, in the next section we show a program whose loop invariant cannot be expressed in first order logic but can be proven correct using a \shrinking{} function which is first-order expressible,  in fact, the reasoning about the automatically synthesized \shrinking{} function is quantifier free.

The last point to discuss is the verification of the program on states in the base of $\sq$. Here, we apply standard verification techniques but to a simpler problem: we need to establish correctness only on the base, a rather small subset of the entire state space. For example, for the programs in our experiments it is possible to utilize symbolic execution to verify the correctness of the programs on all arrays of length three or less. This approach is effective because on the programs in our benchmarks, the bound on the length of the input arrays also determines a bound on the length of the execution.
As this aspect of our technique is rather standard we do not discuss it any further.

\paragraph{Outline.}
The rest of the paper is structured as follows:
We first give an informal overview of our approach (\Cref{Se:Overview}) which is followed by a formal definition of our technique and a proof of its soundness (\Cref{Se:shrinker}).
We continue with a description of our heuristic procedure for  synthesizing squeezing functions (\Cref{Se:Syn})
and a discussion about our implementation and experimental results (\Cref{Se:ImpExper}).
We then review closely related work (\Cref{Se:Related}) and conclude (\Cref{Se:Conc}).

\section{Overview}\label{Se:Overview}

In this section, we give a high-level view of our technique. 

\paragraph{Running example.}  Program \lstinline|sum_bidi|, shown in \Cref{intro:sum-bidi-loopinv},
computes the sum of the input array \lstinline!a! in two ways:
One computation accumulates elements from left to right, and the other ---
from right to left (assuming that indexes grow to the right).
Ignoring its dubious usefulness, \lstinline!sum_bidi! possesses an intricate
property: the variables \lstinline!l! and \lstinline!r! are both computed
to be the sum of the input array \lstinline!a!.
A natural property one expect to hold when the program terminates is that $l=r$. 

\paragraph{The challenge.}
To verify the aforementioned postcondition 
when the length of the array is not known and \emph{unbounded},
a loop invariant is \noamKeepEmph{often employed}. 
It is important to remember, that a loop invariant must hold on all
intermediate loop states --- every time execution hits the loop header. 
For this reason, the loop invariant needed in this case is more involved than the mere assertion $l = r$ that follows the loop. 
The right side of \Cref{intro:sum-bidi-loopinv} shows a possible
loop invariant for this scenario.
Intuitively, the invariant says that  \lstinline!l! and \lstinline!r! differ by the sum of the elements that they have not yet, respectively, accumulated.
Notice that the invariant's formulation relies on a function $\mathrm{sum}(\cdot)$
for arrays (and array slices), the definition of which is also included
in the figure.
This definition is recursive;
indeed, any definition of sum will require some form
of recursion or loop due to the unbounded sizes of arrays in program
memory.
This kind of ``logical escalation'' (from quantifier-free $l=r$ to a
fixed-point logic) makes such verification tasks challenging, since
modern solvers are not particularly effective in the presence of quantifiers
and recursive definitions.

Moreover, a system attempting to automate discovery of such loop invariants
is prone to serious scalability issues since it has to discover the
definition of $\mathrm{sum}(\cdot)$ along the way.
The subject program \lstinline|sum_bidi| effectively computes a sum,
so this auxiliary definition is at the same scale of complexity as
the program itself.

\vspace{1em}

\begin{figure}[t]
\vspace{-2mm}
\begin{tabular}{l@{\hspace{4mm}}|@{\hspace{4mm}}l}
\mediumish
\begin{lstlisting}
void sum_bidi(int a[], int n) {
   int l = 0, r = 0;
   for (int i = 0; i < n; i++) {
      l += a[i];
      r += a[(*@n - i - 1@*)];
   }
   (*@\color{red}assert@*)(l == r);
}
\end{lstlisting}
&
$
\begin{array}{l}
\begin{array}{l@{}l}
  I\eqdef \big( & l + \mathrm{sum}(a[i:n]) = \\[0.25em]
                & r + \mathrm{sum}(a[0:n-i]) \big)
\end{array} \\[1.5em]
\begin{array}{l}
  \mathrm{sum}(a[j:k]) \eqdef{}\\
  \quad \mathbf{if}~j<k \\
  \qquad \mathbf{then}~ a[j] + sum(a[j+1:k]) \\
  \qquad \mathbf{else}~0
\end{array}
\end{array}$
\vspace{-2mm}
\end{tabular}
\caption{A bidirectional sum example and a loop invariant for it.}
\label{intro:sum-bidi-loopinv}
\end{figure}

\paragraph{Our approach.}
We suggest to leverage the semantics already present in the subject program
for a more compact proof of safety.
Instead of having to summarize partial executions of the program via a loop invariant, we show that the program is correct for
all arrays of size $0...r$ for some \emph{base rank} $r$
(the size of the array serves as the {\rank} of the program state),
and further show how to derive the correctness of the program for arrays of size $n > r$, from its correctness for arrays of size $n - 1$.
To achieve the latter, we rely on a function that ``squeezes'' states in which the array length is $n$ to states in which the array length is $n - 1$, as we illustrate next.

Continuing with the example \lstinline|sum_bidi| described above,
we use the function $\func:\setOfStates\to\setOfStates$,
defined as a code block on the right side of \Cref{intro:sum-bidi},
to ``squeeze'' program states.
In this case, the state consists
of the variables $\langle$\lstinline|a|, \lstinline|n|, \lstinline|i|,
\lstinline|l|, \lstinline|r|$\rangle$, and it is squeezed by removing the first element of
\lstinline|a| and adjusting the indices and sums accordingly.
The base rank here is $r=0$, since any non-empty array can be
squeezed in this manner.
The bottom part of \Cref{intro:sample-trace} shows the effect of applying
$\func$ to each of the states in the execution trace of \lstinline|sum_bidi|
on the example input \lstinline|[7,2,9,1,4]|.
The first property that is demonstrated by the diagram is the ``initial anchor'' property, stating that initial states are ``squeezed'' into initial states.
As is obvious from the diagram, the execution on the squeezed array
\lstinline|[2,9,1,4]| is accordingly shorter, so $\func$ cannot be
injective --- in this case, $\func(\sigma_0)=\func(\sigma_1)=\sigma'_0$.
Still, the sequence
$\sigma'_0\to \sigma'_1\to\sigma'_2\to\sigma'_3\to\sigma'_4$
constitutes a valid trace of \lstinline|sum_bidi|.
This is the second property required of $\func$, which we refer to as \emph{simulation inducing} and define it
formally in the next section.

\begin{figure}[t]
\vspace{-2mm}
\begin{tabular}{l@{\hspace{4mm}}|@{\hspace{4mm}}p{5cm}}
\begin{tikzpicture}[>=stealth,baseline=(o)]
  \def\sza{8mm}
  \matrix (m) at (0,1)
    [table, every node/.append style={minimum width=\sza,
                               minimum height=\sza}] {
     |(a0)| 7 & |(a1)| 2 & |(a2)| 9 & |(a3)| 1 & |(a4)| 4 \\
  };
  
  \node[left=0 of m](s) {$\sigma=$};
  
  \node[above left=0 of s] {\smallish$\mathrm{rank}(\sigma)=5$};
  
  \node[above=0 of a0] {\smallish \lstinline|l|$\tighteq 9$};
  \node[above=0 of a4] {\smallish \lstinline|r|$\tighteq 5$};
  \node[above=1mm of a2,anchor=south west](i) 
  	{\smallish \lstinline|i|$\tighteq 2$};
  \draw[->] (i.west) -- (a2);
  
  \coordinate(o) at (0,0);
  
  \matrix (m') at (0,-0.8)
    [table, every node/.append style={minimum width=\sza,
                               minimum height=\sza}] {
     |(a'0)| 2 & |(a'1)| 9 & |(a'2)| 1 & |(a'3)| 4 \\
  };
  
  \node[left=0 of m'](s') {$\func(\sigma)=$};

  \node[below left=1mm of s', inner sep=0, xshift=4mm] {\smallish$\mathrm{rank}(\func(\sigma))=4$};

  \node[above=0 of a'0] {\smallish \lstinline|l|$\tighteq 2$};
  \node[above=0 of a'3] {\smallish \lstinline|r|$\tighteq 4$};
  \node[above=1mm of a'1,anchor=south west](i')
  	{\smallish \lstinline|i|$\tighteq 1$};
  \draw[->] (i'.west) -- (a'1);

  \draw (s.-150) edge[->,dashed,out=-120,in=120] 
  	node[left] {$\func$} (s'.150);

\end{tikzpicture}
&
\mediumish
\vspace{-2cm}
\begin{lstlisting}
(*@$\func:$@*) {
      if (i > 0) {
         remove(a,0);
         i--;
         l -= a[0];
         r -= a[(*@n - i@*)];
      } else {
         remove(a,0);
      }
   }
\end{lstlisting}
\vspace{-1cm}
\end{tabular}
\vspace{-3mm}
\caption{A bidirectional sum example and its {\shrinkagefunc}.}
\label{intro:sum-bidi}
\end{figure}

Now, draw attention to \emph{fault preservation}, the third property required of $\func$:
whenever a state $\sigma$ falsifies the safety property $\varphi$, denoted $\sigma\not\models \varphi$,
it is also the case that $\func(\sigma)$ falsifies the safety property, i.e. $\func(\sigma) \not\models \varphi$. In our example,
the safety property can be formalized as $\varphi \eqdef (i = n \to l = r)$.
The reasoning establishing fault preservation is not immediate but still quite simple: if
$\sigma\not\models\varphi$, it means that $i=n$ but $l\neq r$ (at $\sigma$).
In that case, $a[n-i]=a[0]$; so $l'=l-a[0]\neq r-a[n-i]=r'$, where
$l'$, $r'$ are the values of \lstinline|l| and \lstinline|r|,
respectively, at state $\func(\sigma)$.
Since $i$ and $n$ are both decremented%
\footnote{Notice that we assume a positive size ($n>0$), otherwise the array
  cannot be squeezed in the first place.}
we get $\func(\sigma)\not\models\varphi$.

\begin{figure}[t]
\vspace{-2mm}
\centering
\begin{tikzpicture}[>=latex]
  \def\sza{8mm}
  \matrix (m1) at (0, -0.2)
    [table, every node/.append style={minimum width=\sza,
                               minimum height=\sza}] {
     |(a0)| 7 & |(a1)| 2 & |(a2)| 9 & |(a3)| 1 & |(a4)| 4 \\
  };
  \node[left=2pt of m1.north west,anchor=north east,scale=0.9] {$a=$};
  \matrix at (0,-2) [matrix of nodes,column sep=12mm] {
  	|(s0)| $\sigma_0$ & |(s1)| $\sigma_1$ &
  	|(s2)| $\sigma_2$ & |(s3)| $\sigma_3$ &
  	|(s4)| $\sigma_4$ & |(s5)| $\sigma_5$ \\
  };
  \draw[->] (s0) -- node[below] {\tiny TR} (s1);
  \draw[->] (s1) -- node[below] {\tiny TR} (s2);
  \draw[->] (s2) -- node[below] {\tiny TR} (s3);
  \draw[->] (s3) -- node[below] {\tiny TR} (s4);
  \draw[->] (s4) -- node[below] {\tiny TR} (s5);
  
  \matrix (s0m) [minitable, above=0 of s0] {
    a \tighteq a & |(s0i)| i \tighteq 0 \\
    l \tighteq 0 &         r \tighteq 0 \\
  };
  \matrix (s1m) [minitable, above=0 of s1] {
    a \tighteq a & |(s1i)| i \tighteq 1 \\
    l \tighteq 7 &         r \tighteq 4 \\
  };
  \matrix (s2m) [minitable, above=0 of s2] {
    a \tighteq a & |(s2i)| i \tighteq 2 \\
    l \tighteq 9 &         r \tighteq 5 \\
  };
  \matrix (s3m) [minitable, above=0 of s3] {
    a \tighteq a  & |(s3i)| i \tighteq 3 \\
    l \tighteq 18 &         r \tighteq 14 \\
  };
  \matrix (s4m) [minitable, above=0 of s4] {
    a \tighteq a  & |(s4i)| i \tighteq 4 \\
    l \tighteq 19 & r \tighteq 16 \\
  };
  \matrix (s5m) [minitable, above=0 of s5] {
    a \tighteq a  & |(s5i)| i \tighteq 5 \\
    l \tighteq 23 &         r \tighteq 23 \\
  };
  
  \node[right=0 of s5m] {$\models\varphi$};
  
  \node(terminal)[right=2mm of m1] {};
  
  { [every node/.style={circle,fill,inner sep=1pt}]
  \node at (a0.south) {};
  \node at (a1.south) {};
  \node at (a2.south) {};
  \node at (a3.south) {};
  \node at (a4.south) {};
  \node at (a4.south -| terminal) {};
  }
  
  \draw[dotted] (s0i.45) -- (a0.south);
  \draw[dotted] (s1i) -- (a1.south);
  \draw[dotted] (s2i) -- (a2.south);
  \draw[dotted] (s3i) -- (a3.south);
  \draw[dotted] (s4i) -- (a4.south);
  \draw[dotted] (s5i.135) -- (a4.south -| terminal);
  
  \matrix at (0,-3.1) [matrix of nodes,column sep=12mm] {
  	|(s'0)| $\sigma'_0$ & |(s'1)| $\sigma'_1$ &
  	|(s'2)| $\sigma'_2$ & |(s'3)| $\sigma'_3$ &
  	|(s'4)| $\sigma'_4$ \\
  };
  \draw[->] (s0) -- node[left,xshift=-1pt] {\small$\func$} (s'0);
  \draw[->] (s1) -- node[left] {\small$\func$} (s'0);
  \draw[->] (s2) -- node[left] {\small$\func$} (s'1);
  \draw[->] (s3) -- node[left] {\small$\func$} (s'2);
  \draw[->] (s4) -- node[left] {\small$\func$} (s'3);
  \draw[->] (s5) -- node[left] {\small$\func$} (s'4);
  
  \draw[->] (s'0) -- node[above] {\tiny TR} (s'1);
  \draw[->] (s'1) -- node[above] {\tiny TR} (s'2);
  \draw[->] (s'2) -- node[above] {\tiny TR} (s'3);
  \draw[->] (s'3) -- node[above] {\tiny TR} (s'4);
    
  \matrix (s'0m) [minitable, below=0 of s'0] {
    a \tighteq a' & |(s'0i)| i \tighteq 0 \\
    l \tighteq 0  & r \tighteq 0 \\
  };
  \matrix (s'1m) [minitable, below=0 of s'1] {
    a \tighteq a' & |(s'1i)| i \tighteq 1 \\
    l \tighteq 2  & r \tighteq 4 \\
  };
  \matrix (s'2m) [minitable, below=0 of s'2] {
    a \tighteq a' & |(s'2i)| i \tighteq 2 \\
    l \tighteq 11 & r \tighteq 5 \\
  };
  \matrix (s'3m) [minitable, below=0 of s'3] {
    a \tighteq a'  & |(s'3i)| i \tighteq 3 \\
    l \tighteq 12  &          r \tighteq 14 \\
  };
  \matrix (s'4m) [minitable, below=0 of s'4] {
    a \tighteq a'  & |(s'4i)| i \tighteq 4 \\
    l \tighteq 16  &          r \tighteq 16 \\
  };
  
  \node[right=0 of s'4m] {$\models\varphi$};
  
  \matrix (m2) at (4mm, -5)
    [table, every node/.append style={minimum width=\sza,
                               minimum height=\sza}] {
    |(a'0)| 2 & |(a'1)| 9 & |(a'2)| 1 & |(a'3)| 4 \\
  };  
 
  \node[left=2pt of m2.south west,anchor=south east,scale=0.9] {$a'=$};

  { [every node/.style={circle,fill,inner sep=1pt}]
  \node at (a'0.north) {};
  \node at (a'1.north) {};
  \node at (a'2.north) {};
  \node at (a'3.north) {};
  \node at (a'3.north -| terminal) {};
  }

  \draw[dotted] (s'0i) -- (a'0.north);
  \draw[dotted] (s'1i) -- (a'1.north);
  \draw[dotted] (s'2i.-30) -- (a'2.north);
  \draw[dotted] (s'3i) -- (a'3.north);
  \draw[dotted] (s'4i) -- (a'3.north -| terminal);
   
\end{tikzpicture}
\vspace{-2mm}
\caption{Example trace of \lstinline|sum_bidi|, and the corresponding
  shrunken image.}
\label{intro:sample-trace}
\end{figure}
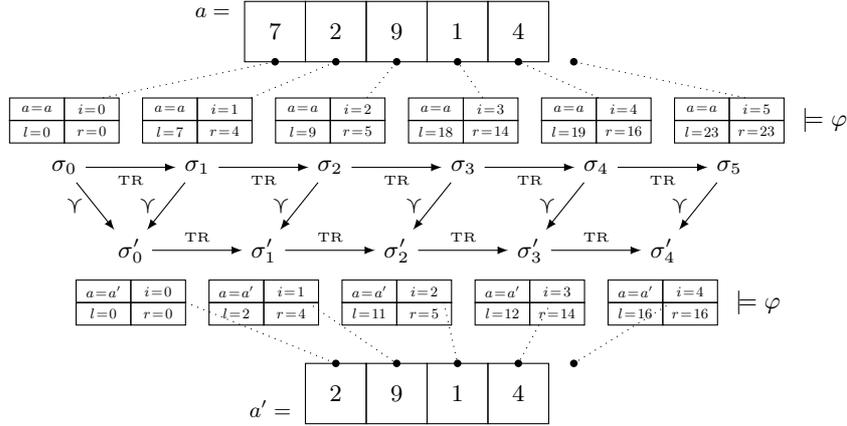

In this manner, from \emph{the assumption that $\func(\sigma_j)$, for $j=0..5$, induces a safe
trace}, we conclude that $\sigma_j$ is safe as well.
This lends the notion of constructing a proof by induction on the size of
the initial state $\sigma_0$, provided that $\func$ cannot ``squeeze forever''
and that we can verify all the minimal cases more easily, \textit{e.g.}
with bounded verification.
This is definitely true for \lstinline|sum_bidi|, since the minimal case
would be an empty array, in which the loop is never entered.
In some situations the minima contains states with small but not empty arrays.
In general, if one can verify that the program is correct when started with a minimal initial state, thus
establishing the base case of the induction, our technique would lift this proof to hold for unbounded initial states.
In particular, if the length of the program's execution trace can be bounded based on the size of the initial state then bounded model checking and symbolic execution can be lifted to obtain
unbounded correctness guarantee.

It is worth mentioning at this point that $\func$ is in no sense ``aware'' that it is,
in fact, reasoning about sums.
It only has to handle scalar operations, in this case subtraction (as the counterpart of addition
that occurs in \lstinline|sum_bidi|;
the same will be true for any other commutative, invertible operation.)
The folding semantics arises spontaneously from the induction over the size of the array.


\paragraph{Recap.}
We suggest a novel verification technique that is based on induction on the size of the input states as an alternative to loop-invariants.
The technique is based on utilizing a {\shrinking} function which converts high-\ranked{} 
states into low-\ranked{} 
 ones, and then  applying a standard verification technique to establish the correctness of the program on the \noamKeepEmph{minimally-ranked}{} 
states.
In a manner analogous to that which is carried out with ``normal''
verification using loop invariants, the {\shrinker} has to uphold the three
properties described in \Cref{Se:Intro}, namely \emph{initial anchor}, \emph{simulation inducing}, and \emph{fault preservation}. (See \Cref{Se:shrinker} for a formal definition.)
These properties ensure that the mapping induces a valid reduction between the safety
of any trace and that of its squeezed counterpart.

\medskip

\begin{paragraph}{Why bother.}
  The attentive readers may ask themselves, given that both loop
  invariants and squeezers incur some proof obligations for them to be
  employed for verification,
  what benefit may come of favoring the latter over the former.
  While the verification condition scheme proposed here is not inherently
  simpler (and arguably less so) than its Floyd-Hoare counterpart,
  we would like to point out that the \emph{{\shrinker} itself}, at least
  in the case of \lstinline|sum_bidi|, \emph{is indeed simpler} than
  the loop invariant that was needed to verify the same specification.
  It is simpler in a sense that it resides in a \emph{weaker logical
  fragment}:
  while the invariant relies on having a definition of (partial) sums,
  itself a recursive definition,
  the {\shrinker} $\func$ can be axiomatized in a quantified-free
  formula using a theory of strings \noamKeepEmph{(sequences)~\cite{Seq}}{} and
  linear arithmetic.
  In \Cref{Se:Syn} we take advantage of the simplicity if the {\shrinking} function,
  and show that it is feasible to \emph{generate it automatically} using
  a simple enumerative synthesis procedure.
\end{paragraph}

On top of that, it is quite immediate to see that the induction scheme
outlined above is still sound even if the properties of $\func$
(initial anchor, simulation, and fault preservation) only hold for
\emph{reachable} states.
Obviously, the set of reachable states cannot be expressed directly ---
otherwise we would have just used its axiomatization together with the
desired safety property, making any use of induction superfluous.
Even so, if we can acquire any known property of reachable states,
\textit{e.g.} through a preliminary phase of abstract interpretation~\cite{CousotCousot77},
then this property can be added as an assumption, simplifying $\func$
itself.
A keen reader may have noticed that the specification of
\lstinline|sum_bidi| has been written down as $\varphi \eqdef (i=n\to l=r)$,
while a completely honest translation of the assertion would in fact
produce a slightly stronger form, $\varphi' \eqdef (i\geq n \to l=r)$.
This was done for presentation purposes; in an actual scenario the
``proper'' specification $\varphi'$ is used, and a premise
$0\leq i \leq n$ is assumed.
Such range properties are prevalent in programs with arrays and indexes,
and can be discovered easily using static analysis, e.g., using the Octagon domain~\cite{Octagon}.

This final point is encouraging because it gives rise to a hybrid
approach, where a \emph{partial} loop invariant is used as a baseline ---
verified via standard techniques ---
and is then \emph{stengthened} to the desired safety property via
{\shrinker}-based verification.
Or, the order could be reversed.
There can even be alternating strengthening phases each using a different
method.
These extended scenarios are potentialities only and are matter for future work.

\section{Verification by Induction over State Size}
\label{Se:shrinker}

In this section we formalize our approach for verifying programs that operate over states (inputs) with an unbounded size. The approach mimics induction over the state size.
The base case of the induction is discharged by verifying the program for executions over ``small'' low-\ranked{} states (to be formalized later).
For the induction step, we need to deduce correctness of executions over ``larger'' higher-\ranked{} states from the correctness of executions over ``smaller'' states.
This is facilitated by the use of a \emph{simulation-inducing {\shrinkagefunc}} $\func$. Intuitively, the function transforms a state $\cstate$ into a corresponding ``smaller'' state $\func(\cstate)$ such that executions starting from the latter simulate executions starting from the former. The simulation ensures that correctness of the executions starting from
the smaller state,
$\func(\cstate)$, implies correctness of the executions starting from the larger one,
$\cstate$.

\paragraph{Transition systems and safety properties.}
To formalize our technique, we first define the semantics of programs using \emph{transition systems}.
The is quite standard.

\begin{definition}[Transition Systems]
\label{def:ts}
A \emph{transition system} $\TS = (\setOfStates, \Init, \Tr, \Spec)$ is a quadruple comprised of a
\emph{universe} (a set of states) $\setOfStates$, a set of \emph{initial states} $\Init \subseteq \setOfStates$, a \emph{transition relation} $\Tr \subseteq \setOfStates \times \setOfStates$, and a set of \emph{good  states} $\Spec \subseteq \setOfStates$.
\end{definition}
A \emph{trace} of $\TS$ is a (finite or infinite) sequence of states $\tau = \sigma_0,\sigma_1,\ldots$ such that for every $0 \leq i < |\tau|$, $(\sigma_i,\sigma_{i+1}) \in \Tr$.
In the following, we write $\Tr^k$, for $k \geq 0$ to denote $k$ self compositions of $\Tr$, where $\Tr^0=\mathit{Id}$ denotes the identity relation.
That is, $(\sigma,\sigma') \in \Tr^k$ if and only if $\sigma'$ is reachable from $\sigma$ by a trace of length $k$
(where the length of a trace is defined to be the number of transitions along the trace).

A transition system $\TS = (\setOfStates, \Init, \Tr, \Spec)$ is \emph{safe} if all its \emph{reachable states} are good (or ``safe''), where the set of reachable states is defined, as usual, to be the set of all states that reside on traces that start from the initial states. A \emph{counterexample trace} is a trace that starts from an initial state and includes a ``bad'' state, i.e., a state that is not in $\Spec$. The transition system is safe if and only if it has no counterexample traces.

\paragraph{Simulation-inducing squeezer.}
To present our technique, we start by formalizing the notion of a simulation-inducing {\shrinkagefunc} (\emph{\shrinker} for short).

\begin{definition}[Squeezing function]
\label{definition_squeezing_function}
Let $X$ be a set and $\preceq$ a well-founded partial order over $X$.
Let $B \supseteq \min(X)$ be a \emph{base} for $X$, where $\min(X)$ is the set of all the minimal elements of $X$ w.r.t.\ $\preceq$, and
let $\m: \setOfStates \to X$ be a \emph{\measure} on the program states.
A function $\func: \setOfStates \to \setOfStates$ is a \emph{{\shrinkagefunc}}, or \emph{\shrinker} for short, \emph{with base $B$} if for every state $\sigma \in \setOfStates$ such that $\m(\sigma) \in X \setminus B$, it holds that $\m(\func(\sigma)) \prec  \m(\sigma)$.
\end{definition}
That is, $\func$ must strictly decrease the {\rank} of any state unless its {\rank} is in the base, $B$.
We refer to states whose size is in $B$ as \emph{base states}, and denote them $\setOfStates_B = \{\sigma \in \setOfStates \mid \m(\sigma) \in B\}$. We denote by $\setOfStates_{\overline{B}} = \setOfStates \setminus \setOfStates_B$ the remaining states.
Since $\preceq$ is well-founded and all the minimal elements of $X$ w.r.t.\ $\preceq$ must be in $B$ (additional elements may be included as well),
any maximal strictly decreasing sequence of elements from $X$ will reach $B$ (i.e., will include at least one element from $B$).
Hence, the requirement of a {\shrinker} ensures that any state will be transformed into a base state by a \emph{finite} number of $\func$ applications.

\begin{example}\label{Ex:Squeeze}
In our examples, we use $(\mathbb{N}, \leq)$ as a well-founded set, 
and define the base as an interval $[0,k]$ for some (small) $k\geq 0$.
While it suffices to define $B =\min(\mathbb{N})=\{0\}$, it is sometimes beneficial to extend the base to an interval since it excludes additional states from the squeezing requirement of $\func$ (see \Cref{Se:ImpExper}).
For array-manipulating programs, the \measure{} used is often (but not necessarily) the size of the underlying array, in which case, the ``squeezing'' requirement is that whenever the array size is greater than $k$, the {\shrinker} must remove at least one element from the array. For example, for \lstinline|sum_bidi| (\autoref{intro:sum-bidi}), we consider $k=0$, i.e., the base consists of arrays of size $0$, and, indeed, whenever the array size is greater than $0$, it is decremented by $\func$. For arrays of size $0$, $\func$ behaves as the identity function (this case is omitted from the figure).
In addition, whenever the state contains more than one array,
we will use the sum of lengts of all arrays as a rank.
\end{example}

\begin{definition}[Simulation-inducing {\shrinker}]
\label{def:simulation}
Given a transition system $\TS = (\setOfStates, \Init, \Tr, \Spec)$, a {\shrinker} $\func: \setOfStates \to \setOfStates$ 
is \emph{simulation-inducing} if the following three conditions hold for every $\sigma \in \setOfStates$: 
\begin{itemize}[$\bullet$]
\item {\bf Initial anchor:} if $\sigma \in \Init$ then $\func(\sigma) \in \Init$ as well.
\item {\bf Simulation inducing:} there exist $n_\sigma \geq 1$ and $m_\sigma \geq 0$ such that if $(\sigma,\sigma')\in \Tr^{n_\sigma}$ then $(\func(\sigma),\func(\sigma'))\in \Tr^{m_{\sigma}}$, i.e., if $\sigma$ reaches $\sigma'$ in $n_\sigma$ steps, then the same holds for their $\func$-images, except that the number of steps may be different.
\item {\bf Fault preservation:} if $\sigma \not\in \Spec$ then $\func(\sigma) \not\in \Spec$ as well.
\end{itemize}
\end{definition}
The definition implies that $\{(\sigma,\func(\sigma)) \mid \sigma \in \setOfStates \}$ 
is a form of a ``skipping'' simulation relation, where steps taken  both from the simulated state, $\sigma$, and from the simulating state, $\func(\sigma)$, may skip over some states. This allows the simulated and the simulating execution to proceed in a different pace, but still remain synchronized. 
In fact, to ensure that we obtain a ``skipping'' simulation, it suffices to consider a weaker
simulation inducing 
requirement where the parameter $m_\sigma$ that determines the number of steps in the simulating trace depends not only on $\sigma$ but also on $\sigma'$ and may be different for each $\sigma'$. Note that for deterministic programs (as we use in our experiments) these   requirements are equivalent.
Another possible, yet stronger, relaxation is to weaken the requirement that $(\func(\sigma),\func(\sigma'))\in \Tr^{m_{\sigma}}$ into $(\func(\sigma),\func(\sigma'))\in \Tr^{i}$ for some $0 \leq i \leq m_{\sigma}$.

\begin{example}
To illustrate the \consecution{} requirement, recall the program \lstinline|sum_bidi| from \Cref{Ex:Squeeze}. For the base states ($n=0$), $\func$ behaves as the identity function. 
Hence, for such states the skipping parameters $n_\sigma$ and $m_\sigma$ are both $1$ (letting each step be simulated by itself). For non-base states, $n_\sigma$, the ``skipping'' parameter of $\sigma$,  is still $1$, while $m_\sigma$, the ``skipping'' parameter of $\func(\sigma)$, is $0$ if $\sigma$ is an initial state, and $1$ otherwise. This accounts for the fact that $\func$ truncates the head of the array; hence, the first step in an execution is skipped in the corresponding ``squeezed'' execution, while the rest of the steps are synchronized in both executions (see \Cref{intro:sample-trace} for an illustration).
\end{example}

Intuitively, one may conjecture that given a loop that iterates over an
array, it will essentially perform fewer iterations when run on
$\func(\sigma)$ than it does on $\sigma$, always resulting in
$m_\sigma \leq n_\sigma$. The following example shows that this is not
necessarily the case.

\begin{figure}
\begin{tabular}{@{}p{6.5cm}@{~~}|@{~~}l@{}}
\vspace{-1cm}
\begin{lstlisting}
bool is_sorted(int a[], int n) {
  for (int i = 1; i < n; i++)
    if (a[i] < a[i-1])
      return false;
  return true;
}
\end{lstlisting}
\vspace{-1cm}
&
\begin{lstlisting}
(*@$\func$@*): if (a[n-3] <= a[n-2] &&
       a[n-2] <= a[n-1])
        remove(a,n-1);
   else remove(a,n-4);
\end{lstlisting}
\end{tabular}
\caption{Another program with $\func$ demonstrating a
  scenario where $n_\sigma<m_\sigma$.}
\label{shrinker:is_sorted}
\end{figure}

\begin{example}
The program \lstinline|is_sorted| (\Cref{shrinker:is_sorted})
checks whether the input array elements are ascending by
comparing all consecutive pairs.
Our squeezer (for $n > 3$) checks whether the last three elements form
an ascending sequence; if so, removes the last element, otherwise
it removes the forth element from the right.
Consider the input a={1,0,2,3,1} and the squeezed a'={1,2,3,1}.
\lstinline|is_sorted(a)| terminates after one iteration, but
\lstinline|is_sorted(a')| after three iterations.
Let $\sigma=\big[a,i\mapsto 1\big]$. The simulation inducing requirement
can only be satisfied with $n_\sigma=1$ and $m_\sigma=3$.
Since $\Tr^{n_\sigma}(\sigma)=\big[a,\textit{ret}=\textrm{false}\big]$,
no smaller value of $m_\sigma$ can satisfy the requirement that
$\Tr^{m_\sigma}\big(\func(\sigma)\big) =
  \func\big(\Tr^{n_\sigma}(\sigma)\big)$.
\end{example}

\paragraph{Checking if a {\shrinker} is simulation-inducing.} 
The \initiation{} and \safety{} requirements are simple to check. To facilitate checking the \consecution{} requirement, we do not allow arbitrarily large numbers $n_\sigma, m_{\sigma}$ but, rather, determine a bound $N$ on the value of $n_\sigma$ and a bound $M$ on the value of 
$m_{\sigma}$.
This makes the \consecution{} requirement stronger than required for soundness, but avoids the need to reason about
pairs of states that are reachable by traces
of unbounded lengths ($n_\sigma$ and $m_\sigma$).

\begin{figure}[t]
\centering
\begin{tikzpicture}[>=latex]
  \def\sza{8mm}
  \matrix at (0,0) [matrix of nodes,column sep=12mm,anchor=west] {
  	|(s0)| $\sigma_0$ & |(s1)| $\sigma_1$ &
  	|(s2)| $\sigma_2$ & |(s3)| $\cdots$ &
  	|(s4)| $\sigma_k$ & |(s5)| $\cdots$ \\
  };
  { [every path/.style={->,dashed}, 
     every node/.style={above,font=\smallish}]
  \draw (s0) -- node {$\Tr^{n_{\sigma_0}}$} (s1);
  \draw (s1) -- node {$\Tr^{n_{\sigma_1}}$} (s2);
  \draw (s2) -- node {$\Tr^{n_{\sigma_2}}$} (s3);
  \draw (s3) -- node {$\Tr^{n_{\sigma_{k-1}}}$} (s4);
  \draw (s4) -- node {$\Tr^{n_{\sigma_k}}$} (s5);
  }
  
  \node[right=0 of s5](safe) {$\models\varphi$};
  
  \matrix at (0,-1.25) [matrix of nodes,column sep=12mm,anchor=west] {
  	|(s'0)| $\sigma'_0$ & |(s'1)| $\sigma'_1$ &
  	|(s'2)| $\sigma'_2$ & |(s'3)| $\cdots$ &
  	|(s'4)| $\sigma'_j$ & |(s'5)| $\cdots$ \\
  };
  \draw[->] (s0) -- node[left] {\small$\func$} (s'0);
  \draw[->] (s1) -- node[left] {\small$\func$} (s'1);
  \draw[->] (s2) -- node[left] {\small$\func$} (s'2);
  \draw[->] (s4) -- node[left] {\small$\func$} (s'4);
  
  { [every path/.style={->,dashed},
     every node/.style={above,font=\smallish}]
  \draw (s'0) -- node {$\Tr^{m_{\sigma_0}}$} (s'1);
  \draw (s'1) -- node {$\Tr^{m_{\sigma_1}}$} (s'2);
  \draw (s'2) -- node {$\Tr^{m_{\sigma_2}}$} (s'3);
  \draw (s'3) -- node {$\Tr^{m_{\sigma_{j-1}}}$} (s'4);
  \draw (s'4) -- node {$\Tr^{m_{\sigma_j}}$} (s'5);
  }  

  \node[right=0 of s'5](safe') {$\models\varphi$};

  \coordinate(midlevel) at (0, -2.5);

  \matrix at (0,-3.8) [matrix of nodes,column sep=10mm,anchor=west] {
  	|(s+0)| $\sigma^{\dagger}_0$ & |(s+1)| $\sigma^{\dagger}_1$ &
  	|(s+2)| $\sigma^{\dagger}_2$ & |(s+3)| $\cdots$ &
  	|(s+4)| $\sigma^{\dagger}_r$ & |(s+5)| $\cdots$ \\
  };
  { [every path/.style={->,dashed}, 
     every node/.style={above,font=\smallish}]
  \draw (s+0) -- node {$\Tr^{m^{\dagger}_0}$} (s+1);
  \draw (s+1) -- node {$\Tr^{m^{\dagger}_1}$} (s+2);
  \draw (s+2) -- node {$\Tr^{m^{\dagger}_2}$} (s+3);
  \draw (s+3) -- node {$\Tr^{m^{\dagger}_{r-1}}$} (s+4);
  \draw (s+4) -- node {$\Tr^{m^{\dagger}_r}$} (s+5);
  }
  
  { [every node/.style={outer sep=1.25mm,circle}]
  \node(mid0) at (s'0 |- midlevel) { ~~ };
  \node(mid1) at (s'1 |- midlevel) { ~~ };
  \node(mid2) at (s'2 |- midlevel) { ~~ };
  \node(mid4) at (s+4 |- midlevel) { ~~ };
  }
  
  { [every node/.style={yshift=3pt}]   
  \node at (mid0) { $\vdots$ };
  \node at (mid1) { $\vdots$ };
  \node at (mid2) { $\vdots$ };
  \node at (mid4) { $\vdots$ };
  }
  
  { [every path/.style={dashed,decorate,decoration=
      {snake,amplitude=.5mm,segment length=8mm}},
     every node/.style={left}]
  \draw (s'0) -- node {\small$\func$} (mid0); 
     \draw[->] (mid0) -- node {\small$\func$} (s+0);
  \draw (s'1) -- node {\small$\func$} (mid1);
     \draw[->] (mid1) -- node {\small$\func$} (s+1);
  \draw (s'2) -- node {\small$\func$} (mid2);
     \draw[->] (mid2) -- node {\small$\func$} (s+2);
  \draw (s'4.south -| mid4) -- node {\small$\func$} (mid4);
     \draw[->] (mid4) -- node {\small$\func$} (s+4);
  }
  
  \node[below=0 of s+0.south east,anchor=west,rotate=-35,
        inner sep=0,outer sep=-1mm] {\tiny ${\in}$\smallish$\hspace{0.5pt}\Sigma_B$};

  \node[right=0.5 of s+5](safe+) {$\models\varphi$};
  
  { [every path/.style={shorten >=1mm, shorten <=1mm,draw=black!50!white}]
  \draw[->,dashed,decorate,decoration=
      {snake,amplitude=.5mm,segment length=7.5mm,pre length=2mm,post length=2.5mm}] (safe+) -- node[right] {$\Uparrow$} (safe');
  \draw[->] (safe') -- node[right] {$\Uparrow$} (safe);
  }
\end{tikzpicture}
\vspace{-3mm}
\caption{Soundness proof sketch; an arbitrary trace can be reduced to
  a low-\ranked{} trace by countable applications of $\func$.
  Since \rank{}s form a well-founded set, a base element is encountered
  after finitely many such reductions.
  Arrows with vertical ellipses indicate alternating applications of
  $\func$ and $\Tr^*$, except for initial states where
  \Cref{def:simulation}(1) ensures straight applications of $\func$ alone.
}
\label{shrinker:proof-sketch}
\end{figure}
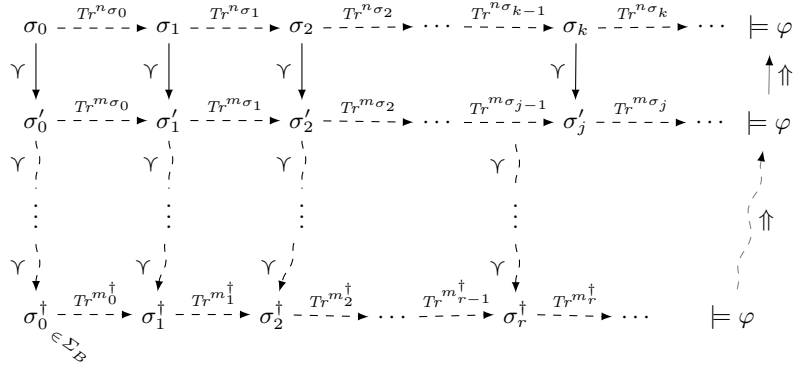

\paragraph{Using simulation-inducing {\shrinker} for safety verification.}

Roughly, the existence of a simulation-inducing {\shrinker} ensures that any counterexample to safety,
i.e., an execution starting from an initial state and ending in a \emph{bad}  state (a state that falsifies the safety property),
can be ``squeezed'' into a counterexample that starts from a ``smaller'' initial state. In this sense, the {\shrinker} establishes the induction step for proving safety by induction over the state \noamKeepEmph{rank}. 
To ensure the correctness of this argument, we need to require that a ``bad'' state
may not be ``skipped'' by the simulation induced by the {\shrinker}.

Formally, this is captured by the following definition.

\begin{definition}\label{def:size-preserving}\label{def:size-}
A transition system $\TS = (\setOfStates, \Init, \Tr, \Spec)$ is  \emph{\recidivist} if no ``bad'' state is a dead-end, i.e., $\sigma \not\in \Spec \implies \exists \sigma'.\,(\sigma,\sigma')\in \Tr$,  and that  transitions leaving ``bad'' states lead  to ``bad'' states, i.e., $\sigma \not\in \Spec \land (\sigma,\sigma')\in \Tr \implies \sigma' \not\in\Spec$.
\end{definition}
\Recidivism{} can be obtained by removing any outgoing transition of a bad state and adding a self loop instead. Importantly, this transformation does not affect the safety of the underlying program.
In our examples, terminal states of the program are treated as self loops, thus ensuring \recidivism.

\begin{lemma}\label{lem:reduce-cex}
Let $\func: \setOfStates \rightarrow \setOfStates$ be a simulation-inducing {\shrinker} 
for a
\recidivist{} transition system $\TS = (\setOfStates, \Init, \Tr, \Spec)$.
For every $\sigma_0 \in \setOfStates$, 
if there exists a counterexample that starts from $\sigma_0$, then there also exists a counterexample that starts from $\func(\sigma_0)$.
\end{lemma}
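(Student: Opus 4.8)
\medskip
\noindent\textbf{Proof plan.}\ The plan is to take a counterexample trace starting at $\sigma_0$, cut it into consecutive segments whose lengths are the \consecution{} parameters $n_\sigma$ of their first states, apply $\func$ to the ``checkpoint'' states delimiting these segments, and argue that the $\func$-images splice together into a counterexample trace starting at $\func(\sigma_0)$.

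Concretely, first I would observe that a counterexample trace starts in $\Init$ by definition, so $\sigma_0 \in \Init$, and hence $\func(\sigma_0) \in \Init$ by the \initiation{} clause. Given a counterexample $\sigma_0, \sigma_1, \ldots, \sigma_L$ with $\sigma_L \notin \Spec$, I would then invoke \recidivism{} to extend it to an \emph{infinite} trace $\sigma_0, \sigma_1, \ldots$: bad states are not dead-ends, so an extension exists, and the second clause of \Cref{def:size-preserving} guarantees $\sigma_i \notin \Spec$ for every $i \ge L$. Next I would define checkpoint indices $\pi_0 = 0$ and $\pi_{k+1} = \pi_k + n_{\sigma_{\pi_k}}$, and put $\tau_k = \sigma_{\pi_k}$. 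Since every $n_\sigma \ge 1$, we have $\pi_k \ge k$, so $\tau_L = \sigma_{\pi_L} \notin \Spec$. By construction $(\tau_k, \tau_{k+1}) \in \Tr^{n_{\tau_k}}$, so the \consecution{} clause gives $(\func(\tau_k), \func(\tau_{k+1})) \in \Tr^{m_{\tau_k}}$ for every $k$.

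Finally, concatenating the $\Tr^{m_{\tau_k}}$-paths for $k = 0, \ldots, L-1$ — a segment with $m_{\tau_k} = 0$ simply contributing nothing, since $\Tr^0 = \mathit{Id}$ — yields a bona fide trace of $\TS$ from $\func(\tau_0) = \func(\sigma_0)$ to $\func(\tau_L)$, and $\func(\tau_L) \notin \Spec$ by the \safety{} clause applied to $\tau_L \notin \Spec$. Thus this trace is a counterexample starting at $\func(\sigma_0) \in \Init$, which is what we want. (This realizes the top-to-middle reduction of \Cref{shrinker:proof-sketch}; iterating it down the well-founded order on \measure{}s is what will then give the soundness theorem.)

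The only delicate point — and hence the main obstacle — is the bookkeeping about trace lengths: the original counterexample may well be shorter than even the single first segment $n_{\sigma_0}$, or the checkpoints may otherwise overrun its end, which is precisely why the preliminary, \recidivism{}-based extension to an infinite bad-suffixed trace has to be performed before any cutting is done (and why \recidivism{} is assumed in the lemma). Beyond that, what remains is the routine verification that the concatenation of the $\Tr^{m_{\tau_k}}$-segments is a legal trace and that the three clauses of \Cref{def:simulation} are applied at exactly the right spots.
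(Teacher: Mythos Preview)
Your proposal is correct and follows essentially the same approach as the paper: partition the counterexample trace into segments of length $n_{\sigma}$ at checkpoint states, apply $\func$ to the checkpoints, invoke the \consecution{} clause to obtain $\Tr^{m_\sigma}$-segments between the images, concatenate, and finish with \safety{} and \initiation. The only cosmetic difference is that you extend the trace to an infinite bad-suffixed trace upfront via \recidivism{} and then pick $L$ checkpoints, whereas the paper keeps the trace finite and extends only ``as needed'' so that the last interval ends exactly at the bad state; your variant is slightly cleaner bookkeeping but the argument is the same.
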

The proof is constructive: given a counterexample trace from $\sigma_0$, we use the simulation-inducing parameters $n_\sigma$ of the states $\sigma$ along the trace to divide it into segments such that the first and last state of each segment are the ones used as synchronization points for the simulation and the inner ones are the ones ``skipped'' over. We then match each segment $(\sigma,\sigma')$ with the corresponding trace of length $m_\sigma$ from $\func(\sigma)$ to $\func(\sigma')$, whose existence is guaranteed by the \consecution{} requirement. The concatenation of these traces forms a counterexample trace from $\func(\sigma_0)$. Formally:

\begin{proof}
Let $\tau = \sigma_0,\sigma_1,\ldots, \sigma_n$ be a counterexample trace starting from an initial state $\sigma_0 \in \Init$. 
If the counterexample is of length $0$, then $\func(\sigma_0)$ is also a counterexample of length 0 (by the \initiation{} and \safety{} requirements). Consider a counterexample of length $n > 0$.
We show how to construct a corresponding counterexample from $\func(\sigma_0)$.
We first split the indices $0,\ldots,n$ into (overlapping) intervals $I_0, \ldots, I_k$, where $I_0 = 0,\ldots,{n_\sigma}$, and for every $i \geq 1$, if the last index in $I_{i-1}$ is $j$ for $j <n$, then $I_i= j,\ldots,{j + n_{\sigma_j}}$. If $j + n_{\sigma_j} \geq n$, then $k:=i$. Since $\TS$ is \recidivist, we may assume, without loss of generality, that $j + n_{\sigma_j} = n$ (otherwise, because $\TS$ is \recidivist{} and $\sigma_n \not\in \Spec$, we can exploit one of the  transitions leaving $\sigma_n$, which necessarily exists and leads to a bad state, to extend the counterexample trace as needed.)
We denote by $\textit{first}(I_i)$, respectively $\textit{last}(I_i)$, the smallest, respectively largest, index in $I_i$.
By the definition of the intervals, for every $0 \leq i \leq k$, we have that $\textit{last}(I_i) = \textit{first}(I_i) + n_{\sigma_{{first}(I_i)}}$.
Hence, the \consecution{} requirement for $\sigma_{\textit{first}(I_{i})}$ ensures that there exists a trace of 
$m_{\sigma_{\textit{first}(I_{i})}}$ steps from $\func(\sigma_{\textit{first}(I_i)})$ to $\func(\sigma_{\textit{last}(I_{i})})$. Since $\sigma_{\textit{first}(I_{0})} = \sigma_0$ and for every $0<i\leq k$, $\sigma_{\textit{first}(I_{i})} = \sigma_{\textit{last}(I_{i-1})}$, we can glue these traces together to obtain a trace from $\func(\sigma_0)$ to $\func(\sigma_{\textit{last}(I_i)})$.
Finally, it remains to show that $\func(\sigma_{\textit{last}(I_{k})}) \not\in \Spec$. This follows from the \safety{} requirement, since ${\textit{last}(I_{k})} = n$, hence $\sigma_{\textit{last}(I_{k})} = \sigma_n \not\in \Spec$. 
\qed
\end{proof}

Ultimately, the existence of a simulation-inducing {\shrinker} implies that a counterexample can be ``squeezed'' to one that starts from a base initial state. Hence, to establish that the transition system is safe, it suffices to check that it is safe when the initial states are restricted to the base states, i.e., to $\Init \cap \setOfStates_B$.

\begin{theorem}[Soundness]
\label{thm:soundness}
Let $\func: \setOfStates \rightarrow \setOfStates$ be a simulation-inducing {\shrinker} with base $B$ for a 
\recidivist{} transition system $\TS = (\setOfStates, \Init, \Tr, \Spec)$.
If $\TS_B = (\setOfStates, \Init \cap \setOfStates_B, \Tr, \Spec)$ is safe then $\TS$ is safe.
\end{theorem}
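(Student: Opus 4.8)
The plan is to argue by contradiction, using well-founded induction on the rank of the initial state and invoking \Cref{lem:reduce-cex} once per induction step. Suppose $\TS$ is \emph{not} safe. Then there is a counterexample trace $\tau$ starting from some $\sigma_0 \in \Init$. I will prove, by well-founded induction over $\preceq$ on $\m(\sigma_0) \in X$, the statement $P(r)$: ``no initial state $\sigma \in \Init$ with $\m(\sigma) = r$ admits a counterexample trace.'' Since $\preceq$ is well-founded, to establish $P(r)$ it suffices to assume $P(r')$ for all $r' \prec r$ and derive $P(r)$.

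For the induction step, fix $r$, fix $\sigma_0 \in \Init$ with $\m(\sigma_0) = r$, and assume towards a contradiction that $\sigma_0$ admits a counterexample. I split on whether $r \in B$. If $r \in B$, then $\sigma_0 \in \Init \cap \setOfStates_B$, and since $\TS_B = (\setOfStates, \Init \cap \setOfStates_B, \Tr, \Spec)$ differs from $\TS$ only in its initial-state set, the very same trace is a counterexample of $\TS_B$, contradicting the assumed safety of $\TS_B$. If $r \notin B$, i.e.\ $\sigma_0 \in \setOfStates_{\overline{B}}$, then \Cref{lem:reduce-cex} (applicable because $\func$ is a simulation-inducing {\shrinker} and $\TS$ is \recidivist) yields a counterexample trace starting from $\func(\sigma_0)$. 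By the \initiation{} requirement of \Cref{def:simulation}, $\func(\sigma_0) \in \Init$, and by \Cref{definition_squeezing_function} (the defining property of a {\shrinker} outside the base), $\m(\func(\sigma_0)) \prec \m(\sigma_0) = r$. Setting $r' = \m(\func(\sigma_0))$, the induction hypothesis $P(r')$ applies to the initial state $\func(\sigma_0)$, which therefore admits no counterexample --- contradiction. This closes the induction, so $P(r)$ holds for every $r \in X$; in particular $\sigma_0$ has no counterexample, and $\TS$ is safe.

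An equivalent way to package the same argument, which I might use instead for readability, is to iterate $\func$ directly: starting from a counterexample at $\sigma_0$, repeatedly apply \Cref{lem:reduce-cex} to obtain counterexamples at $\func(\sigma_0), \func^2(\sigma_0), \dots$, all of which stay in $\Init$ by the \initiation{} requirement; the ranks $\m(\func^i(\sigma_0))$ strictly decrease as long as $\func^i(\sigma_0) \in \setOfStates_{\overline B}$, so by well-foundedness of $\preceq$ this sequence reaches some $\func^j(\sigma_0) \in \setOfStates_B$ after finitely many steps, giving a counterexample from an initial base state and hence contradicting safety of $\TS_B$.

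I do not expect a genuine obstacle here: the substance of the theorem is carried by \Cref{lem:reduce-cex}, and what remains is bookkeeping. The two points that require a little care are (i) observing that a counterexample of $\TS$ whose initial state lies in $\setOfStates_B$ is literally a counterexample of $\TS_B$ (since only the initial-state component changes), and (ii) phrasing the induction over the well-founded order $\preceq$ on $X$ rather than on $\setOfStates$ directly, so that the strict rank decrease $\m(\func(\sigma_0)) \prec \m(\sigma_0)$ furnished by \Cref{definition_squeezing_function} is exactly what feeds the induction hypothesis.
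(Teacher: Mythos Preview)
Your proposal is correct and takes essentially the same approach as the paper: the paper argues by contradiction, picks a counterexample whose initial state has \emph{minimal} rank (well-foundedness), observes it cannot lie in $\setOfStates_B$ by safety of $\TS_B$, and then applies \Cref{lem:reduce-cex} to get a counterexample from $\func(\sigma_0)$ with strictly smaller rank, contradicting minimality. Your explicit well-founded induction and your alternative iterative phrasing are just different packagings of this same minimal-counterexample argument; note that your separate invocation of the \initiation{} requirement is already subsumed by \Cref{lem:reduce-cex}, since a counterexample trace by definition starts from an initial state.
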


\begin{proof}
Suppose for the sake of contradiction, that $\{\sigma_{i}\}_{i=0}^{d}$ is a counterexample trace with minimal \measure{} for $\sigma_0$ (such a state with a minimal \measure{} exists since $\preceq$ is well-founded). Since $\TS_B$ is safe, it must be that $\sigma_0 \in \setOfStates_{\overline{B}}$ (since $\sigma_0 \in \Init$,
while safety of $\TS_B$ ensures that no counterexample trace can start from $\Init \cap \setOfStates_B$).
By \Cref{lem:reduce-cex}, 
we have that $\func(\sigma_0)$ also has an outgoing counterexample trace.
However, since $\sigma_0 \in \setOfStates_{\overline{B}}$, we get that
$\m(\func(\sigma_0))\prec \m(\sigma_0)$, in contradiction to the minimality of $\sigma_0$.
\qed
\end{proof}

In all of our examples, the transitions of $\TS$ do not increase the \measure{} of the state. In such cases, we can also restrict the state space of $\TS_B$ (and accordingly $\Tr$) to the base states in $\setOfStates_B$. Furthermore, in these examples, the size of the state (array) also determines the length of the executions up to a terminal state. Hence, bounded model checking suffices to determine (unbounded) safety of $\TS_B$, and together with $\func$, also of $\TS$.

\begin{remark}
As evident from the proof of \Cref{thm:soundness}, it suffices to require that $\func$ decreases the {\rank} of the \emph{initial} non-base states, and not of all the non-base states.
\end{remark}

\section{Synthesizing Squeezing Functions}\label{Se:Syn}

\bgroup  
\newcommand\combo[2]{#1$_{\textit{#2}}$}
\newcommand\expr[1]{\combo{expr}{#1}}
\newcommand\var[1]{\combo{var}{#1}}
\newcommand\elem[1]{\combo{elem}{#1}}
\newcommand\const[1]{\combo{const}{#1}}

So far we have assumed that the squeezer $\func$ is readily available,
in much the same way that loop invariants are available ---
typically, as user annotations --- in standard unbounded loop verification.
As demonstrated by the examples in \Cref{Se:Overview,Se:shrinker},
$\func$ is specific to a given program and safety property.
Thus, it might be tedious to provide a different squeezer every time we wish to check a different safety property.
In this section we show how to lighten the burden on the user by automating the process of obtaining squeezing functions for a class of typical programs that loop
over arrays.

The solution for the \emph{squeezer-inference} problem we take in this paper is to utilize a rather standard
enumerative synthesis technique of multi-phase generate-and-test: We
take advantage of the relative simplicity of $\sq$
and provide a synthesis loop where we
generate grammatically-correct squeezing functions and  test whether they induce simulation.

\subsection{Generate}\label{Se:Syn:Generate}
First we note that while $\func$ is applied to arbitrary states in
\Cref{def:simulation}, it is only required to reduce the rank of
non-base states $\sigma\in B$.
For states $\sigma\in B$ it is trivial to satisfy all the requirements
by defining $\func(\sigma)=\sigma$.
In the sequel, we therefore only consider \shrinking{} functions whose
restriction to $B$ is the identity, and synthesize code for \shrinking{}
non-base states.

A central insight 
is that \shrinking{} functions $\func$ for different programs
still have some structure in common:
for programs with arrays, squeezing amounts to removing an element from
the array, and adjusting the index variables accordingly.
Some more detailed treatment may be needed for general purpose variables,
such as the accumulators \lstinline|l| and \lstinline|r| of
\lstinline|sum_bidi| (recall \Cref{intro:sum-bidi-loopinv}),
but the resulting expressions are still small.

\begin{figure}[t]
\newcommand\cmpop{$\diamond$\xspace}
\renewcommand\bar{$\,|\,$}
\renewcommand\arraystretch{1.2}
\newcommand\medium{\fontsize{8}{8}\selectfont}
\begin{tabular}{l@{~~}r@{~~}l}
body    & ::= & \lstinline|if (| cond \lstinline|)| \\[-1pt]
        &     & \qquad \lstinline|remove(| arr, \expr{index} \lstinline|)|
                $\big[$var$_{\textit{int}}$
                                   \lstinline|=| \expr{int}$\big]^*$\\
        &      & \lstinline|else|   \\[-3pt]
        &      & \qquad\lstinline|remove(| arr, \expr{index} \lstinline|)|
                 $\big[$var$_{\textit{int}}$
                                   \lstinline|=| \expr{int}$\big]^*$ \\[1pt]
cond    & ::= & \elem{$\tau$} \cmpop (\elem{$\tau$}\bar\const{$\tau$}) \\
           & \bar & \var{index} \cmpop (\var{index}\bar\const{index})
                    \hspace{1.5cm}
                    \cmpop ~~::=~~ \lstinline|==| {\bar} \lstinline|!=|
                            {\bar} \lstinline|<=| {\bar} \lstinline|>=| \\
           & \bar & cond \lstinline|&&| cond
                    \,~{\bar}~ cond \lstinline^||^ cond    \\
\expr{index}
        & ::= & \const{index} {\bar} \var{index}
                {\bar} \lstinline|len(|arr\lstinline|) - |%
                (\const{index}\bar\var{index}) \\
\expr{int}
        & ::= & \var{int} (\lstinline|+|{\bar}\lstinline|-|) \elem{int} \\
\elem{$\tau$}
        & ::= & arr \lstinline|[| \expr{index} \lstinline|]| \\
\const{index}
        & ::= & \lstinline|0| {\bar} \lstinline|1| {\bar} \lstinline|2|
\\
\const{$\tau$}
        & ::= & \lstinline|0| {\bar}
        {\medium other constants occurring in the program} \\
\multicolumn{3}{l}{arr, \var{index}, \var{int}, \var{char} ---
  {\medium identifiers occurring in the program}}
\end{tabular}
\caption{Program space for syntax-guided synthesis of~$\func$.
Expressions are split into three categories:
\textit{index}, \textit{int}, and \textit{char} as described in
\Cref{Se:Syn}.
$\tau\in\{\textit{int},\textit{char}\}$.}
\label{synth:grammar}
\end{figure}

We have found that, for \noamKeepEmph{the set of programs 
used in our experiments},
$\func$ can be characterised by the grammar in \Cref{synth:grammar}.
The grammar allows for functions comprised of a single if statement, where in each branch
an array is squeezed using the \lstinline|remove| function, and several integer variables are set.
Conditions are generating by composing array elements, local variables
and a fixed set of constants based on the given program,
with standard comparison operators and boolean connectives.
The semantics of \lstinline|remove(|arr, position\lstinline|)| are such that
a single element is removed from the array at the specified position,
and all index variables are adjusted by decrementing them if they
are larger from the index of the element being removed.
This behavior is hard-coded and is specific to array-based loops.
Our experience has shown that a single conditional statement is
indeed sufficient to cover many different cases (see \Cref{Se:ImpExper}).

To bound the search space, expressions and conditions have bounded sizes (in terms of AST height)
imposed by the generator \noamKeepEmph{and the user selects the set of basic predicates from which the condition of the \prog{if} statement is constructed}. The resulting  space, however, is still often too large to be explored efficiently.
To reduce it, some type-directed pruning is carried out so that
only valid functions are passed to the checker.
Moreover, our synthesis procedure distinguishes between variables that are used
as indices to the array (\var{index}) and regular integer variables
(\var{int}), and does not mix between them.
We further assume that we can determine, from analyzing the program's
source code, which index variable is used with which array(s).
So when generating expressions of the form
arr\lstinline|[|~\textit{i}~\lstinline|]| \textit{etc.},
only relevant index variables are used.
Also, we note that generated squeezers preserve in bounds access by construction.

\subsection{Test}\label{Se:Syn:Test}
The \emph{test} step checks whether a candidate squeezer that is generated by the synthesizer satisfies the requirements
of \Cref{def:simulation}.
For the simulation-inducing requirement, we restrict $n_{\cstate}=1..2$ and
$m_{\cstate}=0..1$.
The step is divided into three phases.
In the first phase, candidates are checked against a bank of concrete
program states (both reachable and unreachable).
In the second phase, candidates are verified for a bounded array size,
but with no restrictions on the values of the elements.
Those that pass bounded verification enter the third phase where
full, unbounded verification is performed.

The second and third phases of the test step require the use of an SMT solver.
The second phase is useful since incorrect candidates may cause the solver
to diverge when queried for arbitrary array sizes.
Limiting the array size to a small number (we used 6) enables to rule out
these candidates in under a second.
To simplify the satisfiability checks, we found it beneficial to decompose the verification task.
To do so, we take advantage of the structure of the squeezer, and split each satisfiability query (that corresponds to one of the requirements in \Cref{def:simulation})
into two queries, where in each query we make a different assumption regarding the branch the squeezer function takes.
We note in this context that the capabilities of the underlying solver direct
(or limit in some sense) the expressive power of the squeezer.
In this aspect, it is also worth mentioning that
sequence theory support for element removal helped to define squeezers format.

\noamKeepEmph{For the simulation inducing check, we further exploit the property that for the kind of programs and squeezers we consider, the transitions of the program
usually do not change the truth value of the condition of the if statement in the definition of the squeezer.
Namely, if $\cstate$ makes a transition to $\cstate'$ then either both of them satisfy the condition or both of them falsify it; either way, their
definition of $\func$ follows the same branch.
This form of preservation can be checked automatically using additional queries.
When it holds, we can consider the same branch of the squeezer program in both the pre- and post-states, thus simplifying the query for checking simulation.
Similarly, we can opportunistically split the transition relation of the program into branches (e.g., one that executes an iteration of the loop and one that exits the loop).
In most cases, the same branch that was taken for $\cstate$ is also the one that needs to be taken from $\func(\cstate)$ to establish simulation. This leads to another simplification of the queries, which is sound
(i.e., never concludes that the simulation-inducing requirement holds when it does not), but potentially incomplete.
We can therefore use it as a ``cheaper'' check and resort to the full check if it fails.
}

\subsection{Filtering out unreachable states}
\label{Se:Unreachable}
For soundness, a squeezer needs to satisfy \Cref{def:simulation}
only on the reachable states. As we do not have a description of this set, for otherwise the verification task would be \noamKeepEmph{essentially}{} voided, we need to ensure that the requirements of simulation-inducement on a safe over-approximation of this set. A simple over-approximation would be the set of all states. However, this over-approximation might be too coarse, indeed we noticed in our experiments that in some cases, unreachable states have caused phases $1$, $2$ and $3$ to produce false negatives,\ie, disqualify squeezers which can  be used safely to verify the program.
Therefore we used an over-approximation of reachable states using
\begin{enumerate}
\item Bound constraints on the index variables: the index is expected to be within bounds of the traversed array. This property can be easily verified using other verifiers or by applying our verifier in stages, first proving this property and then proving the actual specification of the verified procedure under the assumption that the property hold.
\item 2-step bounded reachability:
We found out that for our examples, looking only at states that are
reachable from another state in at most two steps is a general enough inclusion criterion.
Note that we do not require 2-step reachability from an initial state, but rather from \emph{any} state, hence this set
over-approximates the set of reachable states.
\end{enumerate}

\egroup

\section{Implementation and Experimental Results}
\label{Se:ImpExper}
We implemented an automatic verifier for array programs based on our approach,
and applied it successfully to verify natural properties of a few \noamKeepEmph{interesting}{} 
array-manipulating programs.

\paragraph{Base case.}
We discharged the base case of the induction (the verification on the base states)
using KLEE~\cite{KLEE}---a state-of-the-art symbolic execution~\cite{Cadar:2013} engine.
It took KLEE less than one tenth of  a second to verify the correctness of each program in our benchmarks on the states in its base.
This part of our verification approach is standard, and we discuss it no further;
in the rest of this section we focus on the generation of the squeezing functions.

\subsection{Implementation}
\label{Se:Imp}


The generate step and phase $1$ of the test step of
the squeezer synthesizer were implemented using a standalone C++ application
 that generates all
$\func$ candidates with an AST of depth three.
Each squeezer was tested on a pre-prepared state bank and every time a squeezer passed the tests it was immediately passed on to phase 2.
The state bank contained states with arrays of length five or less.
For each benchmark,
we used
up to 24,386 states with randomly selected array contents.
The number of states was determined as follows:
Suppose the program state is comprised of
$k$  variables and an array of size $n$.
We randomly selected $p$ elements that can populate the array:
$p=\{'a','b',0\}$ for string manipulating procedures and
$p=\{-4,-2,9,100,200 \}$ for programs that manipulate integer arrays.
We determined the number test states according to the following formula:
$d^k \cdot |p|^n/df$, where $df$ is an arbitrary dilution factor used to reduce the number of states from thousands to hundreds. (In our experiments, $df=17$.)

The second and third phases  were implemented using Z3~\cite{Z3},
a state of the art SMT solver.
We chose to use the theory of sequences,
since its API allows for a straightforward definition of the operation
{remove\lstinline|(|arr\lstinline|,|$i$\lstinline|)|} (see \Cref{synth:grammar}).
In practice, the sequence solver proved to be overall more effective than a
corresponding encoding using the more mature array solver.
In that aspect, it is worth mentioning that verifying fault preservation on its own 
\emph{is} faster with the theory of arrays. We conjecture that this is because the specification has quantifiers  while the other requirements can be verified using quantifier-free reasoning.

The transition relation was manually encoded in SMT-LIB2 format.
However, it should be straightforward to automate this step.


\subsection{Experimental Evaluation}
\label{Se:ExperRes}

We evaluated our technique by verifying a few array-manipulating  programs against
their expected specifications.
The experiments were executed on a laptop
with Intel i7-8565 CPU (4 cores) with 16GB of RAM running Ubuntu 18.04.

\paragraph{Benchmarks.}
We ran our experiments on seven  array-manipulating programs:
\prog{strnchr} 
looks for the first appearance of a given character in the first
$n$ characters of a string buffer.
\noamKeepEmph{\prog{strncmp} compares whether two strings are identical up to their first $n$ characters or the first zero character}.
\prog{max\_ind} (resp. \prog{min\_ind})
looks for the index of the maximal (resp. minimal) element in an integer array.
\prog{sum\_bidi} 
is our running example.
\noamKeepEmph{\prog{is\_sorted} 
checks if the elements of an array are sorted in an increasing order.}
\noamKeepEmph{\prog{long\_pref} 
is looking for the longest prefix of an array comprised of either a monotonically increasing or a monotonically decreasing sequence.}

The user supplies predicates that are used when synthesizing each squeezer.
These were selected based on understanding what the program does and the
operations it uses internally.
\textit{E.g.}, for \lstinline|strncmp| equality comparisons between
same-index elements of the two input arrays are used (\verb"s1[0]==s2[0]" etc.), as well as
comparison with constant 0;
for \lstinline|long_pref|, order comparisons (\verb"s1[1]<=s1[2]" etc.) between different elements
of the same array are used instead.




\begin{table}[t]
\begin{tabular}{l|c|r|r|r|r|r|r|r|rcl|r}
\multicolumn{3}{ c }{           } &
\multicolumn{3}{|c|}{\bf Phase 1} &
\multicolumn{2}{|c|}{\bf Phase 2} &
\multicolumn{1}{|c|}{\bf Phase 3} & 
\multicolumn{3}{|c|}{\bf Total Time} &
\multicolumn{1}{|c }{\QUIC      }
\\
\textbf{Program   } &
\textbf{B         } &
\textbf{\#\,Cand  } &
\textbf{\scriptsize $|$Bank$|$} &
\textbf{\scriptsize Test    } &
\textbf{\scriptsize Time      } &
\textbf{\scriptsize BMC     } &
\textbf{\scriptsize Time      } & 
\multicolumn{1}{|c|}{\bf\scriptsize Time} &
\multicolumn{3}{|c|}{\bf\scriptsize G\&T+KLEE} &
\multicolumn{1}{|c }{\bf\scriptsize Time}
\\ \hline
\prog{strnchr} &
2              & 
80             & 
356            & 
29             & 
0.004          & 
1              & 
0.12           & 
0.16           & 
0.28&+&0.07      & 
0.32             
\\
\prog{strncmp} &
2                & 
980              & 
76               & 
196              & 
0.02             & 
1                & 
7.2              & 
154.48           & 
161.70&+&0.05        & 
0.19               
\\
\prog{max\_ind} &
2               & 
8000            & 
368             & 
10              & 
0.18            & 
2               & 
1.86            & 
4.44            & 
4.73 &+& 0.05   & 
0.11              
\\
\prog{min\_ind} &
2               & 
8000            & 
257             & 
9               & 
0.26            & 
2               & 
2.1             & 
16.86           & 
17.21 &+& 0.05  & 
0.09              
\\
\prog{sum\_bidi} &
2                & 
6328125          & 
4602             & 
1200             & 
2.18             & 
1                & 
0.57             & 
0.61             & 
3.36 &+& 0.05    & 
t.o.               
\\
\prog{is\_sorted} &
4                & 
900              & 
25736            & 
764              & 
4.37             & 
1                & 
0.59             & 
0.67             & 
5.63 &+& 0.06    & 
0.15               
\\
\prog{long\_pref} &
3                & 
6480             & 
24386            & 
4696             & 
22.93            & 
1                & 
1.25             & 
0.89             & 
25.07&+&0.05     & 
t.o.               
\\
\end{tabular}
\vspace{0mm}
\caption{Experimental results (end-to-end). Time in seconds. \colbf{G\&T} is a shorthand for Generate\&Test}
\label{Table_Experimental_results}
\end{table}

\paragraph{Results.}
\Cref{Table_Experimental_results}
describes the end-to-end running times of our verifier, i.e.,
the time it took our tool to establish the correctness of each example.
\noamKeepEmph{In this experiment, every candidate squeezer was tested before the next squeezer was generated. }
The table shows the time it took the synthesizer to find the first simulation-inducing squeezer plus the time it took to establish the correctness of the programs on the states in the base using KLEE  (\colbf{Total Time}).
The table also compares our verifier to \QUIC~\cite{ATVA18}, an automatic synthesizer of loop invariants.
In general, when both tools where able to prove that the analyzed procedure is correct, {\QUIC} was somewhat faster, and in the case of \prog{strncmp} much faster. However, on two of our benchmarks
{\QUIC} timed out ($1$ hour) whereas our tool was able to prove them correct in less than 30 seconds.

\Cref{Table_Experimental_results}
also provides more detailed statistics regarding the experiments:
The  rank of the base states (\colbf{B}),
the total number of possible candidates based on the supplied predicates and the bound on the depth of the AST 
(\colbf{\#\,Cand}),{} and a more detailed view of each phase in the testing step.
For phase 1, it reports the number of states in the pre-prepared state bank (\colbf{$|$Bank$|$}),
the number of squeezers tested until a simulation-inducing one was found (\colbf{Test}),
and the total time spent
\noamKeepEmph{to test these squeezers (\colbf{Time})}.
For phase 2, it reports the number of candidates which passed phase 1 and survived bounded verification (\colbf{BMC})
and the time spent in this phase  (\colbf{Time}).
\noamKeepEmph{For phase 3, we  report how many simulation-inducing squeezers were found the time it took to apply full verification}.

In all our experiments except of \prog{max/min\_ind} only the simulation-inducing squeezers passed bounded verification. In the latter case, a squeezer passed BMC due to the use of arrays of size at most five
where the cells $a[2]$ and $a[n-2]$ are adjacent. Had we increased the array bound to six, these false positives would have been eliminated by the bounded verification.



\begin{table}[t]
\centering
\begin{tabular}{l|r|r|r|r|r|r|r|r|r|r}
\multicolumn{1}{ c }{           } &
\multicolumn{4}{|c|}{\bf Phase 1} &
\multicolumn{3}{|c|}{\bf Phase 2} &
\multicolumn{3}{|c}{\bf Phase 3}
\\
\textbf{Program}     &
$|$\textbf{\scriptsize Pos.}$|$  &
   \textbf{\scriptsize Time}     &
$|$\textbf{\scriptsize Neg.}$|$ &
   \textbf{\scriptsize Time}     &
$|$\textbf{\scriptsize Pos.}$|$  &
   \textbf{\scriptsize Time}     &
  $\textbf{\scriptsize Time}_{| \textbf{\tiny Neg.} |}$ &
$|$\textbf{\scriptsize Pos.}$|$  &
   \textbf{\scriptsize Time}     &
  $\textbf{\scriptsize Time}_{| \textbf{\tiny Neg.} |}$
\\ \hline
\prog{strnchr} &
1              & 
$\epsilon$     & 
9              & 
$\epsilon$     & 
1              & 
0.94           & 
$-$            & 
1              & 
0.98           & 
$-$              
\\
%
\prog{strncmp} &
3              & 
$\epsilon$     & 
36             & 
$\epsilon$     & 
3              & 
14.29          & 
$-$            & 
3              & 
154.48         & 
$-$              
\\
\prog{max\_ind} &
11              & 
$\epsilon$      & 
3               & 
$\epsilon$      & 
2               & 
0.78            & 
1.08            & 
1               & 
31.00              & 
0.41              

\\
\prog{min\_ind} &
11              & 
$\epsilon$      & 
7               & 
$\epsilon$      & 
2               & 
0.91            & 
1.19            & 
1               & 
16.00           & 
0.43              
\\
\prog{sum\_bidi} &
12               & 
$\epsilon$       & 
1                & 
0.05             & 
1                & 
0.56             & 
0.69             & 
1                & 
0.61             & 
$-$                

\\
\prog{is\_sorted} &
1                & 
$\epsilon$       & 
18               & 
$\epsilon$       & 
1                & 
0.59             & 
$-$              & 
1                & 
0.67             & 
$-$                

\\
\prog{long\_pref} &
2                 & 
$\epsilon$        & 
74                & 
$\epsilon$        & 
1                 & 
1.03              & 
1.22              & 
1                 & 
0.89              & 
$-$                 
\\
\end{tabular}
\vspace{2mm}
\caption{Experimental results. Time in seconds. $\epsilon \leq 0.0001$ \NRX{Check ver times for min + sum (both 16)}}
\label{Table_Experimental_results_statistics_data}
\end{table}

\Cref{Table_Experimental_results_statistics_data}
provides average times required to pass all the generated squeezers through the testing pipeline.
For phase 1, it reports the number of squeezers which passed (\colbf{Pos}) resp. failed (\colbf{Neg}) testing against the randomly generated states and the average time it took to test the squeezers in each category (\colbf{Time}).
The table reports the statistics pertaining to phase 2 and 3 in a similar manner, \noamKeepEmph{except that it omits the number
of squeezers which failed the phase as this number can be read off the number of squeezers which reached this phase.}


\begin{table}[t]
\centering
\newcommand\zno[1]{{\smallish(#1)}~}
\renewcommand\arraystretch{1.1}
\begin{tabular}{l@{~}|@{~}l}
\textbf{Program}  & \textbf{\Shrinker} \\ \hline
\prog{strchr(c)} &  \prog{if ( s[0] == c || s[0]==0 ) remove(s,1) else remove(s,0)}
\\
\prog{strncmp} & \zno1\lstinline|if (s1[0] == s2[0] && s1[0] != 0) remove(s1,0); remove(s2,0)| \\
               & ~~~~\lstinline|else remove(s1,1); remove(s2,1)|
\\
               & \zno2\lstinline|if (s1[0] == s2[0] && s2[0] != 0) remove(s1,0); remove(s2,0)| \\
               & ~~~~\lstinline|else remove(s1,1); remove(s2,1)|
\\
               & \zno3\lstinline^if (s1[0] != s2[0]) || (s1[0] == 0 && s2[0] == 0))^\\
               & ~~~~\lstinline|     remove(s1,1); remove(s2,1)| \\
               & ~~~~\lstinline|else remove(s1,0); remove(s2,0)|
\\
\prog{max\_ind} & \prog{if (s[n-2] <= s[n-1]) remove(s,n-2) else remove(s,n-1)}
\\
\prog{is\_sorted} & \prog{if (s[n-3]<=s[n-2]<=s[n-1]) remove(s,n-1)} \prog{else remove(s,n-4)}
\\
\prog{long_pref} & \prog{if ((s[0]<=s[1]<= s[2]) || (s[0]>s[1]>s[2]))} \prog{remove(s,0) } \\
                 & \prog{else remove(s,n-1)}
\end{tabular}
\vspace{0mm}
\caption{Syntesized squeezers. \noamKeepEmph{$n$ is the size of the input array}}
\label{Ta:ExperFunc}
\end{table}

\Cref{Ta:ExperFunc} shows some of the automatically generated squeezers. \noamKeepEmph{We
obtained a single simulation-inducing squeezer in all of our tests except for \prog{strncmp} where three squeezers were synthesized. The three differ only syntactically by the condition of the \prog{if} statements. However, semantically, the three conditions are equivalent. Thus, improving the symmetry-detection optimizations to include equivalence up-to-de morgan rules would have filtered out two of the three squeezers.}

\section{Related Work}
\label{Se:Related}

\begin{paragraph}{}
Automatic verification of
infinite-state systems, \ie, systems where the size of an individual
state is unbounded
such as numerical programs (where data is considered unbounded),
array manipulating programs (where both the length of the array and the data it contains may be unbounded),
programs with dynamic memory allocation (with unbounded number of dynamically-allocatable memory objects),
and parameterized systems (where, in most cases, there is an unbounded number of instances of finite subsystems) is a long standing challenge in the realm of formal methods.
\end{paragraph}

\begin{paragraph}{Well structured transition systems.}
Well structured transition systems (WSTS)~\cite{DBLP:journals/tcs/FinkelS01,DBLP:journals/iandc/AbdullaCJT00,DBLP:conf/lics/AbdullaCJT96} are a class of infinite-state transition systems for which safety verification is decidable, with a backward reachability analysis being a decision procedure.
In these transitions systems, the set of states is accompanied by a well-quasi order that induces a simulation relation: a state is simulated by those that are ``larger'' than it.
As a result, the set of backward-reachable states is upward closed. The simulation-inducing well-quasi order used in WSTS
resembles our condition of a simulation-inducing squeezer. However, there are several fundamental differences:
\begin{inparaenum}[(i)]
\item The order underlying our technique is required to be well-founded, which is a strictly weaker requirement than that of a well-quasi order; 
\item The simulation-inducing requirement requires each state to be simulated by its squeezed version, which has a \emph{lower} rank rather than greater;
further, a state need not be simulated by \emph{every} state with a lower rank; accordingly, the set of backward-reachable states need not be upward (nor downward) closed.
\item Our procedure is not based on backward (or any other form of) reachability analysis.
\end{inparaenum}
\end{paragraph}

\begin{paragraph}{Reductions.}
Cutoff-based techniques, e.g.,~\cite{Emerson00}, reduce model checking of unbounded parameterized systems to model checking for systems of size (up to) a small predetermined cutoff size.
Verification based on dynamic cut-offs~\cite{CutOff,Abdulla13} also considers parameterized systems but employs a verification procedure which can dynamically detect cut-off points beyond which the search of the state space need not continue.
Invisible invariants~\cite{InvisibleNeither,InvInv} are used to verify unbounded parameterized systems in a bounded way.
The idea is to use the standard deductive invariance rule for proving invariance properties but consider only bounded systems for discharging the verification conditions, while ensuring that they hold for the unbounded system. The approach provides
(i) a heuristic to generate a candidate inductive invariant for the proof rule, and (ii) a method to validate the premises of the proof rule once a candidate is generated~\cite{InvisibleNeither}.
\end{paragraph}

Similar reductions were applied to array programs--a particular form of parameterized systems but with unbounded data--as we consider in this work.
For example, in~\cite{DBLP:conf/tacas/KumarSVS18-rev2}, \emph{shrinkable} loops are identified as
loops that traverse large or unbounded arrays but may be soundly replaced by a bounded number of nondeterministically chosen iterations; and
in~\cite{DBLP:conf/sas/MonniauxG16-rev5}, abstraction is used to
replace reasoning about unbounded arrays and quantified properties by reasoning about a bounded number of array cells.

A fundamental difference between our approach and these works is that we do not reduce the problem to a bounded verification problem. Instead, we generate verification conditions which amount to a proof by induction on the size of the system.
\noamKeepEmph{In fact, from the perspective of deductive verification, our work can be seen as introducing a new induction scheme.}

\begin{paragraph}{Loop invariant inference.}
Arguably, inference of loop invariants is the ubiquitous  approach for automatic verification of
infinite-size systems. 
Recent research efforts in the area have concentrated around inference of quantified invariants,
in particular, the search for universal loop invariants is a central issue.

Classical predicate
abstraction~\cite{DBLP:conf/cav/GrafS97,DBLP:conf/tacas/BallPR01} has
been adapted to quantified invariants by extending predicates with
\emph{skolem} (fresh)
variables~\cite{DBLP:conf/popl/FlanaganQ02,DBLP:conf/vmcai/LahiriB04}.
This is sufficient for discovering complex loop invariants of array
manipulating programs similar to the simpler programs used in our
experiments.

A research avenue that has received ongoing popularity is the use of
constrained Horn clauses (CHCs) to model properties of transition
systems which have been used for inference of universally quantified
invariants~\cite{DBLP:conf/sas/BjornerMR13,DBLP:conf/sas/MonniauxG16,fse16}
by limiting the quantifier nesting in the loop invariant
being sought.
In~\cite{DBLP:conf/cav/FedyukovichPMG19-rev4}, universally quantified solutions (inductive invariants) to CHCs
are inferred via syntax-guided synthesis.

Another active research area is Model-Checking Modulo Theories
(MCMT)~\cite{DBLP:conf/cade/GhilardiR10} which extends model checking to
array manipulating programs and has been used for verifying
heap manipulating programs and parameterized systems
(e.g.,~\cite{DBLP:conf/fmcad/ConchonGKMZ13}) using
quantifier elimination techniques.
For example, in \textsc{Safari}~\cite{DBLP:conf/cav/AlbertiBGRS12} (and later
\textsc{Booster}~\cite{DBLP:conf/atva/AlbertiGS14}), the theory of arrays~\cite{DBLP:journals/corr/abs-1204-2386} is used to construct a QF proof  of bounded safety which is generalized by universally quantifying out some terms.

\textsc{IC3}~\cite{DBLP:conf/vmcai/Bradley11} extends predicate abstraction into a framework in which the predicate
discovery is directed by the verification goal and heuristics are used to generalize proofs of bounded depth execution to inductive invariants.
\textsc{UPDR}~\cite{DBLP:conf/cav/KarbyshevBIRS15} and \textsc{Quic3}~\cite{ATVA18}
extend  \textsc{IC3} to quantified invariants.
UPDR focuses on programs specified 
using the Effectively PRopositional (EPR) fragment of \emph{uninterpreted} first order logic (e.g., without arithmetic) for
which quantified satisfiability is decidable. As such, UPDR does not
deal with quantifier instantiation.
\textsc{Quic3}  uses model based projection
and generalizations
based on bounded exploration.

Like these techniques we also use heuristics to overcome the unavoidable undecidability barrier.
In our case, this amounts to the selection of the squeezing function.
In contrast to all the aforementioned approaches, our technique does not rely on the inference of loop invariant but rather proves programs correct by induction on the size (rank) of their states.
\end{paragraph}

\begin{paragraph}{}
We note that we do not position our technique as a replacement to automatic inference of loop invariants but rather as a complementary approach. Indeed, while some tricky properties can be easily verified by our approach, e.g.,
the postcondition of \lstinline|sum_bidi|, a property which we believe no other automatic technique can deduce, other properties which are simple to establish using loop invariants, e.g., that variable \lstinline|i| is always in the range $0..n\!-\!1$, are surprisingly challenging for our technique to establish.
\end{paragraph}

\begin{paragraph}{Recurrences.}
Other approaches represent the behavior of loops in array-programs via recurrences defined over an explicit loop counter,
and use these recurrences to directly verify post-conditions with universal quantification over the array indices.
In~\cite{DBLP:conf/vstte/RajkhowaL18-rev1} this is done by customized instantiation schemes and explicit induction when necessary.
In~\cite{DBLP:conf/sas/ChakrabortyGU17-rev3}, verification is done by identifying a relation between loop iterations (characterized by the loop counter)
and the array indices that are affected by them, and verifying that the post-condition holds for these indices.
Similarly to our approach, these works do not rely on loop invariants, but they do not allow to verify global properties over the arrays, such as
the postcondition of \lstinline|sum_bidi|.
\end{paragraph}

\begin{paragraph}{Program synthesis.}
The inference we use for $\func$ is indeed a form of program synthesis,
as was alluded to in \Cref{Se:Overview} by representing $\func$ via pseudo-code.
In particular, \emph{syntax-guided synthesis} (SyGuS) \cite{alur2015syntax} is the domain of program synthesis where the target program is derived from a programming language according to its syntax rules.
\cite{itzhaky2016deriving,udupa2013transit,farzan2019modular,wang2017synthesizing} all fall within this scope.

\emph{Sketching} is a common feature of SyGuS.
The term is inspired by Sketch~\cite{solar2006combinatorial}, referring to the
practice of giving synthesizers a program skeleton with a missing piece or pieces.
This uses domain knowledge to reduce the size of the candidate space.
It is quite common to use a domain-specific language (DSL) for this purpose%
~\cite{srivastava2010program,smith2016mapreduce,srivastava2013template,hua2017edsketch,wang2018solver}.
\cite{osera2015type} restricts programs by typing rules in addition to just syntax.
\cite{gulwani2016pbe} develops it further by restricting how operators may be composed.
Our synthesis procedure (\Cref{Se:Syn}) follows the same guidelines:
the domain of array-scanning programs dictates the constructed space of
squeezer functions, and moreover, inspecting the analyzed program allows
for more pruning by (i) matching index variables to array variables and
(ii) focusing on operators and literal values occurring in the program.
This early pruning is responsible for the feasibility of our synthesis
procedure, which apart from that is rather naive and does not facilitate
clever optimizations such as equivalence reduction%
~\cite{osera2015type,feser2015synthesizing}.
\end{paragraph}

\section{Conclusions}\label{Se:Conc}

At the current state of affairs in automatic software verification of infinite state systems,
the scene is dominated by various approaches with a common aim:
computing over-approximations of unbounded executions by means of inferring loop invariants.
Indeed, \emph{abstract interpretation}~\cite{CousotCousot77},
\emph{property-directed reachability}~\cite{DBLP:conf/vmcai/Bradley11},
unbounded model checking~\cite{Lazy06},
or
template-based verification~\cite{Srivastava2013}
can be seen as different techniques for computing such approximations by finding inductive loop invariants
which are tight enough not to intersect with the set of bad behaviors.
Experience has shown that
these invariants are frequently quite hard to come by, even for seemingly
simple and innocuous program, both automatically and manually.
The purpose of this paper is to suggest an alternative kind of correctness
witness, which \noamKeepEmph{may be}{} more amenable to automated search.
We successfully applied our novel verification technique to array programs and managed to prove programs and properties which are beyond the ability of existing automatic verifiers.
We believe that our approach can  be combined with standard techniques to give rise to a new kind of hybrid
techniques, where, e.g., a \emph{partial} loop invariant is used as a baseline ---
verified via standard techniques ---
and is then \emph{strengthened} to the desired safety property via
{\shrinker}-based verification.

{\mediumish
\paragraph{Acknowledgements.}
The research leading to these results has received funding from the European Research Council under the European Union's Horizon 2020 research and innovation programme (grant agreement No [759102-SVIS]), the Lev Blavatnik and the Blavatnik Family foundation, Blavatnik Interdisciplinary Cyber Research Center
at Tel Aviv University, Pazy Foundation, Israel
Science Foundation (ISF) grants No. 1996/18 and 1810/18,
and the Binational Science Foundation (NSF-BSF) grant 2018675.
}

\newpage

\bibliographystyle{splncs04}
\bibliography{biblio,srefs,bib-hila}

\end{document}